\documentclass[11pt]{article}

\usepackage[preprint]{acl}

\usepackage{times}
\usepackage{latexsym}

\usepackage[T1]{fontenc}

\usepackage[utf8]{inputenc}
\usepackage{amsfonts}
 \usepackage{amssymb} 

\usepackage{microtype}

\usepackage{inconsolata}

\usepackage[most]{tcolorbox}
\usepackage[table]{xcolor}
\usepackage{booktabs}
\usepackage{multirow}
\usepackage{amsmath}
\usepackage{enumitem}
\usepackage{xcolor}
\usepackage{xspace}
\usepackage{subcaption}
\usepackage{listings}

\usepackage[utf8]{inputenc}
\usepackage[T1]{fontenc}
\usepackage{fontawesome5}
\tcbuselibrary{listings, skins}

\usepackage{amsthm}

\newtheorem{proposition}{Proposition}

\newcommand{\hideit}[1]{}

\newcommand{\ours}{\textsc{AgenticRec}\xspace}

\title{\ours: A Recommendation-Oriented Agentic Framework with Progressive Tool-Integrated Reasoning Optimization}

\author{
  \textbf{Tianyi Li}\thanks{~~Equal contribution.} , 
  \textbf{Zixuan Wang}\footnotemark[1] , 
  \textbf{Guidong Lei}, 
  \textbf{Xiaodong Li}, 
  \textbf{Hui Li}\thanks{~~Corresponding author.} \\
  Xiamen University, Xiamen, China \\
  \texttt{\{litianyi, williamzixuan, leiguidong\}@stu.xmu.edu.cn} \\
  \texttt{\{xdli, hui\}@xmu.edu.cn}
}

\begin{document}
\maketitle
\begin{abstract}
Recommender agents built on Large Language Models offer a promising paradigm for personalized recommendation. However, existing agents typically suffer from a misalignment between their tool-integrated reasoning trajectories and recommendation feedback, limiting their ability to distinguish fine-grained user preferences. To address these challenges, we propose \ours, an agentic recommendation framework that formulates recommendation as a tool-integrated reasoning process over a recommendation-oriented tool suite. Built upon this framework, we further develop a dedicated two-stage training paradigm tailored for recommender agents. In the first stage, we introduce Recommendation-Oriented Trajectory Activation, optimize the agentic recommendation ability under implicit feedback. In the second stage, Progressive Preference Refinement further refines the agent through bidirectional preference reasoning over self-bootstrapped hard pairs, progressively sharpening preference boundaries. Theoretical analysis and extensive experiments demonstrate the effectiveness of \ours. Our code is available at https://anonymous.4open.science/r/AgenticRec-FB16.
\end{abstract}

\section{Introduction}
\label{intro}

\begin{figure}[t]
  \centering
  \includegraphics[width=\columnwidth]{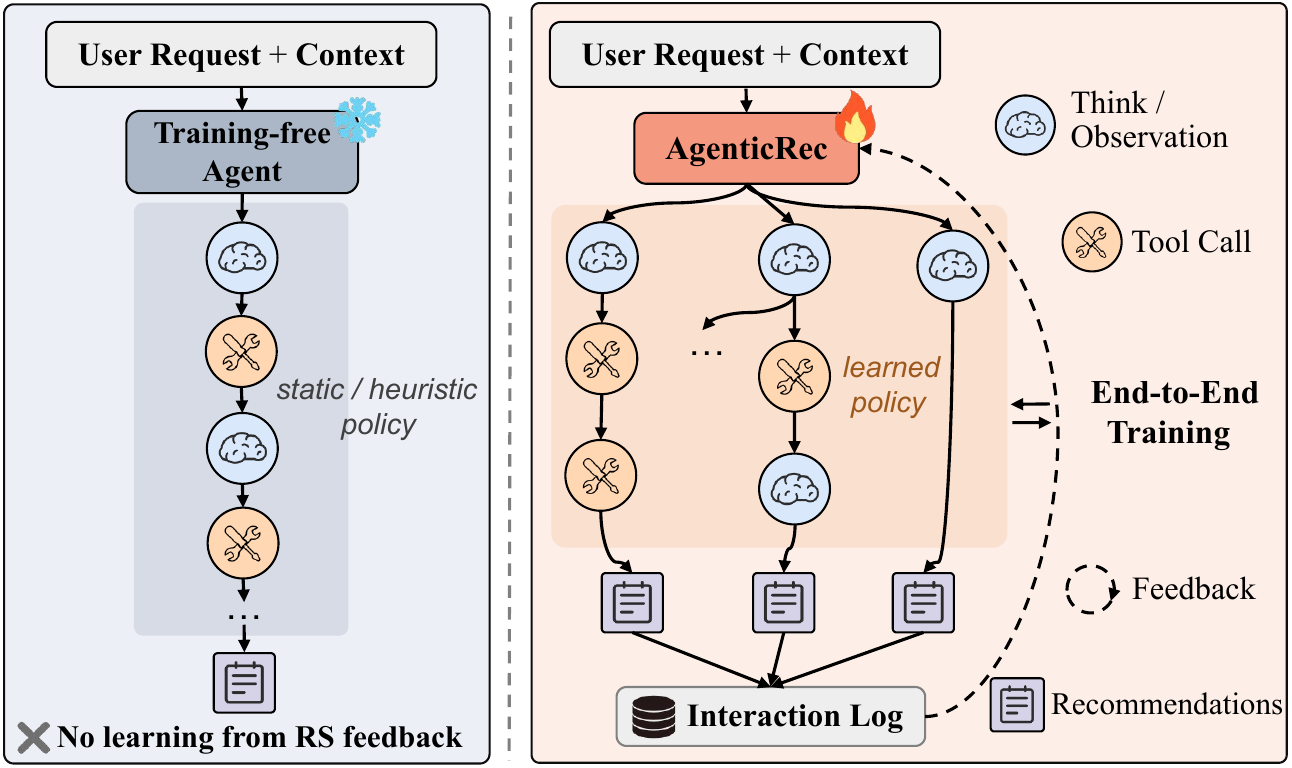}
  \vspace{-20pt}
  \caption{Compared to existing training-free recommender agents like RecMind~\cite{RecMind} and InteRecAgent~\cite{InteRecAgent} (left), \ours (right) learns recommendation-driven tool use and reasoning from real-world interaction logs.}
  \vspace{-10pt}
  \label{fig:teaser}
\end{figure}

Recommender Systems are indispensable for mitigating information overload in the digital era~\cite{ChatRecSurvey, LLM4RecSurvey}.
Recently, Large Language Models (LLMs) have significantly affected recommendation~\cite{LLM4RecSurvey,Survey4RecAgent} due to their strong natural language understanding and reasoning capabilities~\cite{deepseekr1,gpt5}, enabling semantic matching between complex user intents and diverse item descriptions.
However, recommendation is ultimately based on large-scale implicit feedback and collaborative signals, which cannot be fully captured by language priors alone~\cite{LLM4RecSurvey,LLM4RecSurvey2}.

Early attempts adapt LLMs to recommendation via fine-tuning, typically formulating the task as next-token prediction through implicit-feedback supervision~\cite{P5, TALLRec, llara}.
Although effective, they remain largely one-shot predictors that output recommendation results from the given prompt.
Moreover, real-world systems often involve incomplete and heterogeneous information: ambiguous user intent~\cite{userintent}, missing item information~\cite{RecSurvey2}, and collaborative signals not fully accessible from text alone~\cite{llara}.

To overcome these limitations, recommender agents have recently emerged as a promising paradigm, serving as the central hub of recommendation systems~\cite{InteRecAgent} or being applied in conversational recommendation scenarios~\cite{MACRS}, leveraging agents’ strong planning, reasoning, and memory capabilities to enable more interactive, context-aware, and proactive recommendation~\cite{Survey4RecAgent,RecAgentSurvey}.

Along this direction, recent work has explored recommender agents~\cite{abs-2503-05659,Survey4RecAgent} that equip LLMs with structured decision loops (e.g., ReAct~\cite{react}) and tools to perform \emph{``Think and Act before Recommendation''} for evidence-aware recommendation~\cite{RecMind, InteRecAgent, ToolRec}.
Despite their promising progress, existing recommender agents still face two challenges: 
\begin{itemize}[leftmargin=10pt,topsep=5pt,itemsep=0.2pt]
    \item \textbf{Misalignment between Tool-Integrated Reasoning Trajectories and Recommendation Feedback:} A common line of existing methods are driven by generic language priors or handcrafted prompt heuristics~\cite{RecMind,InteRecAgent} (the left part of Fig.~\ref{fig:teaser}), leaving their reasoning trajectories not optimized under recommendation feedback. Since tool use is inherently intertwined with reasoning, not only \emph{when/what} to invoke but also \emph{how} the retrieved evidence is integrated into the final decision must be determined via reasoning, posing a challenge to the design. 

   \item \textbf{Insufficiency in Resolving Fine-Grained User Preferences:} Existing methods typically lack a mechanism to progressively sharpen fine-grained preference boundaries on hard, highly confusable candidates. In practice, implicit feedback provides sparse supervision, and inaccurate recommendations may arise from subtle preference differences among top-ranked candidates, which are difficult to correct using coarse list-level signals alone. As a result, agents may learn globally reasonable behaviors, yet remain brittle in challenging cases.
\end{itemize}

Motivated by these challenges, we propose \ours, a framework that formulates agentic recommendation and optimizes it under recommendation feedback via a dedicated two-stage training paradigm tailored to it.
To support evidence-grounded recommendation decision making, we design a suite of recommendation-oriented tools, including user profile analysis, item information retrieval, behavioral statistics, and collaborative information access, which are specifically tailored to the agent's tool-integrated reasoning recommendation workflow.
Built upon this framework, we develop a two-stage optimization paradigm. 
In the first stage, we introduce Recommendation-Oriented Trajectory Activation, which enables end-to-end optimization of tool-integrated reasoning trajectories under implicit recommendation feedback. Specifically, we optimize the entire decision trajectory through tool-integrated policy optimization.
In the second stage, we further propose Progressive Preference Refinement (PPR), a self-bootstrapped refinement strategy that progressively sharpens fine-grained preference boundaries among highly confusable candidates. Specifically, PPR mines hard competitors from the agent’s own ranking violations and performs bidirectional preference reasoning to transform challenging recommendation errors into refinement supervision signals.

Our contributions are summarized as follows:
\begin{itemize}[leftmargin=10pt,topsep=3pt,itemsep=0.2pt]
    \item We propose an agentic recommendation framework that integrates a suite of carefully designed recommendation-oriented tools into the reasoning process to support evidence-grounded recommendation, together with a tool-integrated policy optimization for training recommender agents.

    \item Within the proposed framework, we develop a two-stage training paradigm consisting of Recommendation-Oriented Trajectory Activation for end-to-end optimization of tool-integrated reasoning trajectories, and Progressive Preference Refinement for progressively sharpening fine-grained preference discrimination among highly confusable items.

    \item  Theoretical analysis and extensive experiments on multiple recommendation benchmarks demonstrate the recommendation effectiveness of \ours.
\end{itemize}

\section{Related Work}
\label{related_work}

\begin{figure*}[t]
  \centering
  \includegraphics[width=0.97\textwidth]{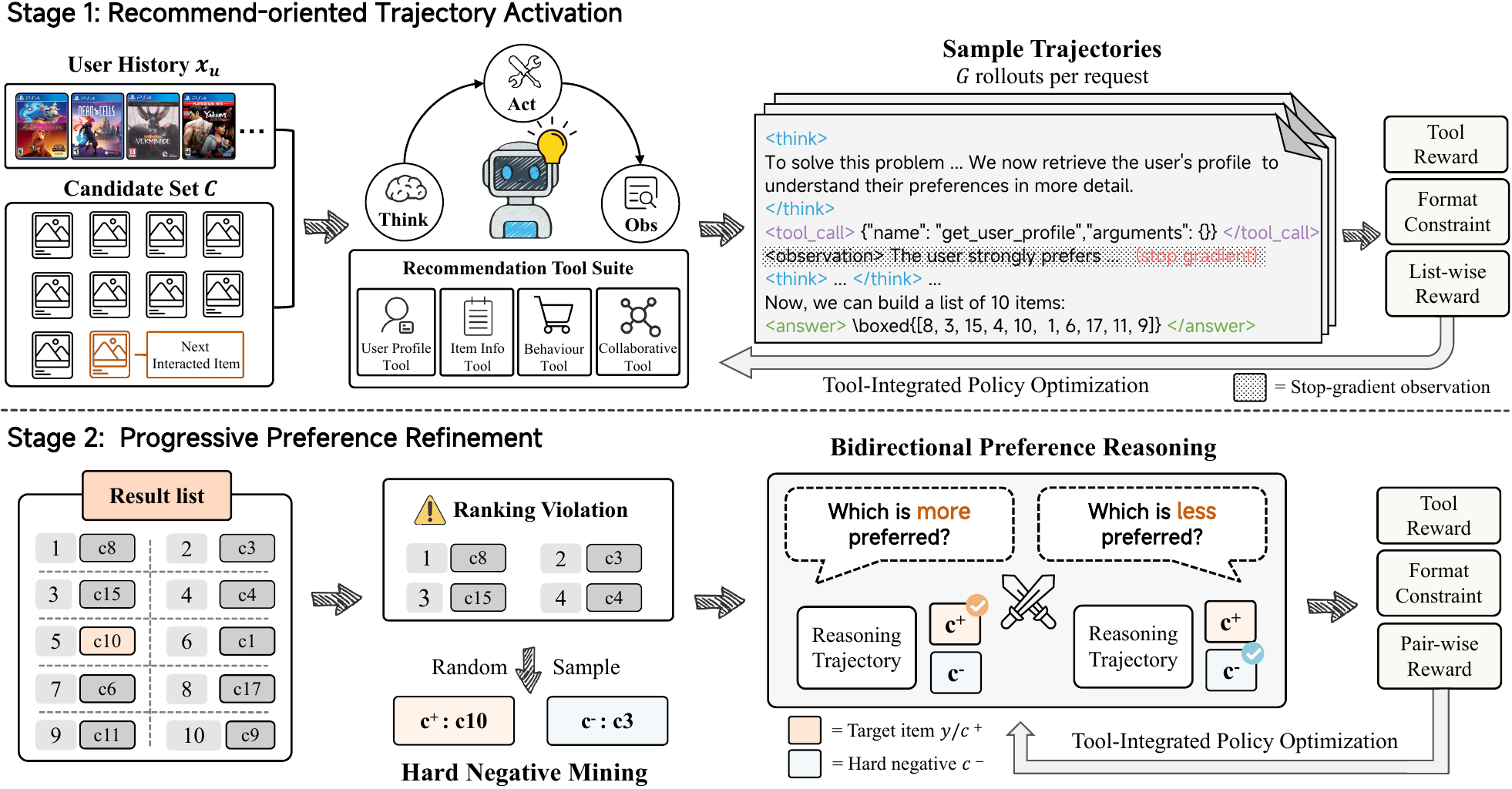}
  \caption{Overview of \ours.}
  \label{fig:model}
\end{figure*}

\paragraph{LLMs for Recommendation.} 
LLMs have been explored for recommendation through prompting~\cite{ChatRec, LLMRank} and task-unified fine-tuning~\cite{P5,TALLRec,CoLLM,llara}. Recent studies further introduce reasoning to enhance multi-step recommendation decisions~\cite{abs-2503-24235, Reason4Rec, Rec-R1, R2EC, OneRec-Think}. 

\paragraph{Recommender Agents.} 
Beyond treating LLMs as black-box predictors, existing studies have introduced the LLM-agent paradigm into recommendation, which can be categorized into three directions~\cite{Survey4RecAgent}: \textit{simulation-oriented agents}~\cite{Agent4Rec, AgentCF, UserSimulator, RecAgent} that emulate user behaviors for analysis and data generation, \textit{interaction-oriented agents}~\cite{MACRS, Hybrid-MACRS, CRAVE} that handle users' natural-language requests, and \textit{recommender-oriented agents}~\cite{RecMind, InteRecAgent, STARec, Mr.Rec, MGFRec} that directly leverage the general intelligence and tool-use ability of LLMs to provide recommendations, which is the focus of this paper. 
Recommender agents can not only improve recommendation quality and interpretability, but also serve as AI assistants in real-world applications, showing great potential to reshape users' online shopping experiences.
\section{Problem Formulation}
\label{sec:problem_definition}
Following prior recommender agent setting~\cite{RecMind,iAgent}, given a user interaction history $x_u=[v_1,v_2,\dots,v_t]$ and a candidate set $\mathcal{C}=\{c_1,c_2,\dots,c_n\}$, the recommender agent aims to generate a ranked list
$r_K=[c_{\sigma(1)},c_{\sigma(2)},\dots,c_{\sigma(K)}]$
of the Top-$K$ items that are most likely to match the user’s preferences, where $\sigma(\cdot)$ denotes the ranking order over candidate items and $K \le n$.

Our goal is to learn an agent policy $\pi_\theta$ that maximizes the expected list-wise utility under real-world feedback $y$:
\[
\max_{\theta}\ \mathbb{E}_{(x_u,\mathcal{C},y)\sim\mathcal{D}}\big[R(r_K, y)\big], 
\]
where $R(\cdot)$ measures the ranking quality induced by implicit feedback (e.g., NDCG@K).
\section{\ours} 
\label{sec:method}

In this section we introduce \ours, an agentic recommendation framework which couples a tool-integrated reasoning  workflow with a recommendation-oriented tool suite (Sec.~\ref{sec:framework}).
Built upon this framework, we develop a two-stage training paradigm:
(1) Recommendation-oriented Trajectory Activation (Sec.~\ref{sec:listwise_grpo}) end-to-end activates the agentic recommendation ability;
(2) Progressive Preference Refinement (Sec.~\ref{sec:ppr}) further sharpens fine-grained preference discrimination through bidirectional reasoning over self-bootstrapped hard negatives.
Fig.~\ref{fig:model} provides an overview of \ours.

\subsection{Overall Framework}
\label{sec:framework}
To support evidence-grounded recommendation decision making, we formulate \ours as a Tool-Integrated Reasoning agentic recommendation workflow that interleaves deep reasoning with designed recommendation-oriented tool suite. This enables the agent to go beyond the fixed linguistic knowledge of LLMs and autonomously explore across different task scenarios.

\paragraph{Tool-Integrated Reasoning Recommendation}
Following the ReAct paradigm~\cite{react}, the workflow forms a closed loop that interleaves reasoning and tool invocation for recommendation.
The agent iteratively generates \textit{Think} tokens, optionally executes an \textit{Act} by selecting a tool, receives its \textit{Observation (Obs)}, and finally outputs a \textit{Recommendation (Rec)} list:
\[
\textit{Think}_1 \rightarrow \textit{Act}_1 \rightarrow \textit{Obs}_1 \rightarrow \cdots \rightarrow \textit{Rec}.
\]

\paragraph{Recommendation-Oriented Tool Suite}
Drawing on practical experience from recommender-system applications, we define a recommendation-specific tool suite. Notably, our framework allows additional tools to be seamlessly integrated. We briefly introduce the main tools below, with more details provided in Appendix~\ref{app:tool}.

\paragraph{\normalfont\itshape (1) User Profile Tool.}
It retrieves an LLM-pre-generated profile summarizing user information and long-term preferences, such as favored categories and interests inferred from historical interactions. It can also be extended to update profiles with newly observed feedback.

\paragraph{\normalfont\itshape (2) Item Information Tool.}
It provides query interfaces for item attributes, such as descriptions and prices, helping the agent understand candidate items on demand during reasoning.

\paragraph{\normalfont\itshape (3) Behavioral Statistics Tool.}
It extracts features helpful for recommendation based on user behavior data, turning manual feature engineering into an agentic process to better utilize vast behavioral records.

\paragraph{\normalfont\itshape (4) Collaborative Information Tool.}
It maintains user and item embeddings containing collaborative signals, which support user-to-user and item-to-item retrieval, enabling the agent to reference similar user groups and related items for recommendation. Such collaborative reasoning is difficult to infer from LLM priors alone.

\subsection{Recommendation-Oriented Trajectory Activation}
\label{sec:listwise_grpo}

As the first training stage, Recommendation-oriented Trajectory Activation (RTA) aims to activate the agentic recommendation ability via tool-integrated policy optimization, where recommendation-specific rewards are propagated through the full reasoning and tool-invocation trajectory through reinforcement learning.

\paragraph{Reward Design} 
We design the reward function by considering three aspects: recommendation quality, output format, and tool-use control.
Formally, the reward is defined as:
\begin{equation}
\label{eq:main_reward}
\resizebox{\linewidth}{!}{$
    R(r_K, y)=
    \begin{cases}
        R_{\text{rec}}(r_K, y) & \text{if } \text{valid}(r_K)\ \text{and}\ y \in r_K, \\
        -0.5 & \text{if } \text{valid}(r_K)\ \text{and}\ y \notin r_K, \\
        -1 & \text{otherwise},
    \end{cases}
$}
\end{equation} 
where $R_{\text{rec}}$ denotes the recommendation reward, and $\text{valid}(r_K)$ indicates that the output satisfies all validity constraints. The detailed designs are described as follows.

\paragraph{\normalfont\itshape (1) Recommendation Reward.}
We model $R_{\text{rec}}$ by computing the NDCG score of the user's ground-truth interacted item in the recommendation list, so as to align the ranking objective with user preference signals. Formally,
\begin{equation}
\label{eq:rankreward}
    R_{\text{rec}}(r_K, y)=\mathrm{NDCG@}K(r_K; y).
\end{equation}
If $y \notin r_K$, the recommendation is considered unsuccessful and receives the penalty $-0.5$ in Eq.~\ref{eq:main_reward}.

\paragraph{\normalfont\itshape (2) Format Constraint.}
Invalid outputs (e.g., invalid items or tools) lead to execution failures in real-world applications.
We therefore assign a penalty of $-1$ when the format is incorrect, enforcing strict validity of generated recommendation.

\paragraph{\normalfont\itshape (3) Tool-call Bonus.}
To encourage tool use while avoiding reward hacking, we add a small bonus only for high-quality tool-use trajectories. After computing $R_{\text{rec}}$, we add $+0.1$ if the agent achieves Hit@1 and $N_{\text{tool}}(\tau)>0$. Here, $N_{\text{tool}}(\tau)$ denotes the number of tool invocations in $\tau$.

\paragraph{\normalfont\itshape (4) Tool-call Budget.}
To control latency and avoid over-reliance on external evidence, we limit each trajectory to at most $10$ tool interactions. Exceeding this budget makes the trajectory invalid and receives the invalid-format penalty.

\paragraph{Policy Optimization Algorithm} 
For each training instance $(x_u, \mathcal{C}, y)\sim\mathcal{D}$, we employ GRPO~\cite{deepseekmath} as our tool-integrated policy optimization algorithm with rewards in Eq.~\ref{eq:main_reward}, which estimates advantage using a group $G$ of rollouts.

\paragraph{Dynamic Sampling}
Since early training often yields uniformly low rewards and thus noisy updates, inspired by the dynamic sampling strategy in DAPO~\cite{DAPO}, we drop training instances whose rollout group contains no informative trajectories, i.e., all sampled trajectories receive negative rewards.

\subsection{Progressive Preference Refinement}
\label{sec:ppr}

After the RTA stage, a globally effective policy for recommendation has been activated in the agent.
To further strengthen the agent's capability, we introduce Progressive Preference Refinement (PPR) as a self-bootstrapped refinement stage. 

We mine hard negatives from items incorrectly ranked above the ground-truth item, and then train the agent with bidirectional preference reasoning on the resulting hard pairs.
This bidirectional thinking asks the agent not only to identify what the user is likely to prefer, but also to explicitly reason about what the user is less likely to choose. 
In this way, PPR moves beyond merely fitting positive implicit feedback, and progressively sharpens fine-grained preference boundaries.

\paragraph{Hard Negative Mining from Ranking Violations} 
Inspired by prior work~\cite{negative_mining}, we exploit the agent's own outputs to automatically mine hard negatives. For each input $(x_u, \mathcal{C})$, we identify the ground-truth positive item $c^+=y \in \mathcal{C}$ based on user feedback. After the agent generates a ranked list $r_K$, we check whether $c^+$ is placed at the top position. If $c^+$ is not ranked first, we treat this case as a \emph{ranking violation}, since at least one competing item is ranked above the positive, revealing an informative preference boundary.

Concretely, let $\operatorname{rank}_{r_K}(c)$ denote the position of item $c$ in the list $r_K$ (top-$K$). We construct a hard-negative candidate set by collecting items within the top-$K$ that are ranked above the positive; if the positive item is not in $r_K$, we use the entire top-$K$ list as the hard-negative pool:
\begin{equation}
    \resizebox{\linewidth}{!}{$
    H(x_u, \mathcal{C}) = 
        \begin{cases}
            \{  c^-\mid c^- \in r_K,\, \operatorname{rank}_{r_K}(c^-) < \operatorname{rank}_{r_K}(c^+) \}, & \text{if } c^+ \in r_K, \\
            r_K, & \text{if } c^+ \notin r_K .
        \end{cases}
    $}
\end{equation}

Instead of deterministically picking the top-ranked item, we randomly sample one negative item from $H$ to form a hard preference pair $(c^+, c^-)$. This stochastic mining increases negative diversity and avoids overly-adversarial pairs that can yield ambiguous supervision, providing a more stable refinement signal while remaining focused on competitive negatives. The mining procedure requires no external labeling beyond the implicit feedback used to identify $c^+$.

\paragraph{Bidirectional Preference Reasoning} 
The mined hard pairs expose challenging cases where the current agent fails to distinguish the true preference from strong competitors. 
The agent not only needs to identify what the user is likely to click, but also to explicitly recognize what the user is less likely to choose under the same context. 
However, training only on the positive direction provides one-sided supervision: the hard negative is used as a contrastive counterpart, while the reason why it should be rejected remains under-explored. 
To overcome the limitations of one-sided supervision, we propose \emph{bidirectional preference reasoning}, a training mechanism that turns each hard pair into two complementary tasks, thereby enhancing the agent's versatility in recommendation.

Given a mined hard preference pair $(c^+, c^-)$ under the same user context $x_u$, we refine the agent via bidirectional preference reasoning from two symmetric directions:
\begin{itemize}[leftmargin=10pt,topsep=5pt,itemsep=0.2pt]
    \item \textbf{Positive Direction:} We first task the agent with the standard objective: identifying the item that the user is more likely to interact with next. This reinforces the agent's ability to recognize matching attributes between the user behavior and the ground-truth item $c^+$.

    \item \textbf{Negative Direction:} We introduce a counter-perspective task where the agent must explicitly identify the item the user is less likely to choose. This forces the agent to actively deliberate on the specific shortcomings or mismatching features of the hard negative item $c^{-}$ relative to the user's history, rather than simply treating it as a background non-positive sample.
\end{itemize}
A prompt example of the PPR task is provided in Appendix~\ref{app:prompt}. Both directions share the same structured output format but differ in the queried perspective, thereby providing complementary learning signals for enhancing the agent's capability.

\paragraph{Training Procedure}
PPR follows the same tool-integrated policy optimization principle as RTA.
For each mined pair $(c^+,c^-)$, we construct two preference tasks corresponding to the positive and negative directions, and perform rollouts under the \ours framework. 
Different from Eq.~\ref{eq:main_reward}, PPR replaces $R_{\text{rec}}$ with a binary pair-wise reward for each direction:
\begin{equation}
    R_{\text{pair}}^{d}=\mathbb{I}[\hat{c}^{d}=c^{d}_{\star}], \quad d\in\{+,-\},
\end{equation}
where $c^{+}_{\star}=c^+$ for the positive direction, $c^{-}_{\star}=c^-$ for the negative direction, and $\hat{c}^{d}$ is the agent's predicted item under direction $d$. 
Other reward components are kept unchanged to ensure valid outputs and controlled tool invocation during preference refinement.

\subsection{Theoretical Justifications}
We provide theoretical justifications showing that both training stages of \ours are well-grounded in improving recommendation performance. Specifically,
(1) the RTA gradient estimator with the inclusive group-average baseline yields a directionally unbiased estimate of the expected list-wise utility objective (i.e., NDCG@K), ensuring accurate credit assignment over tool-integrated trajectories;
(2) the PPR objective on mined hard pairs minimizes a convex upper bound of the pair-wise ranking error probability, thereby contributing to improved ranking performance.
Together, these results indicate that the two stages of \ours are theoretically sound and jointly contribute to the final result. Formal statements and proofs are deferred to Appendix~\ref{app:theoretical_analysis}.

\section{Experiments}
\label{sec:experiments}

\subsection{Experimental Setup}
\label{sec:experimental_setup}

\begin{table*}[t]
  \centering
  \caption{Evaluation results among baselines and \ours. The best performance score is denoted in \textbf{bold}.}
  \label{tab:overall}
  \renewcommand{\arraystretch}{1.2}
  \resizebox{\textwidth}{!}{ 
    \setlength{\tabcolsep}{0.8mm}{ 
      \begin{tabular}{@{}lcccccccccccccccccccc@{}}
        \toprule
        \multicolumn{1}{c}{\multirow{2.5}{*}{\textbf{Model}}} & \multicolumn{5}{c}{\textbf{CDs}} & \multicolumn{5}{c}{\textbf{Instruments}} & \multicolumn{5}{c}{\textbf{Office}} & \multicolumn{5}{c}{\textbf{Games}} \\
        \cmidrule(l){2-6} \cmidrule(l){7-11} \cmidrule(l){12-16} \cmidrule(l){17-21}
        & H@1 & H@5 & N@5 & H@10 & N@10 & H@1 & H@5 & N@5 & H@10 & N@10 & H@1 & H@5 & N@5 & H@10 & N@10 & H@1 & H@5 & N@5 & H@10 & N@10 \\
        \midrule 
        \rowcolor{gray!8} 
        \multicolumn{21}{@{}c}{\textit{\textbf{Conventional Sequential Recommenders}}} \\
        \midrule
        Caser & 0.0488 & 0.2526 & 0.1487 & 0.5089 & 0.2307 & 0.1825 & 0.4644 & 0.3275 & 0.6384 & 0.3828 & 0.1771 & 0.4374 & 0.3139 & 0.5541 & 0.3511 & 0.1535 & 0.3953 & 0.2813 & 0.5220 & 0.3217 \\
        GRU4Rec & 0.0512 & 0.2348 & 0.1403 & 0.4851 & 0.2202 & 0.1960 & 0.4534 & 0.3291 & 0.6446 & 0.3907 & 0.1642 & 0.4324 & 0.3054 & 0.5705 & 0.3490 & 0.1708 & 0.3781 & 0.2817 & 0.5239 & 0.3288 \\
        SASRec & 0.0536 & 0.2348 & 0.1413 & 0.4660 & 0.2145 & 0.1813 & 0.4546 & 0.3238 & 0.6323 & 0.3807 & 0.1662 & 0.4270 & 0.3018 & 0.5863 & 0.3525 & 0.1535 & 0.3915 & 0.2762 & 0.5374 & 0.3234 \\
        ReaRec & 0.0357 & 0.1871 & 0.1082 & 0.4576 & 0.1942 & 0.1507 & 0.3786 & 0.2633 & 0.6188 & 0.3400 & 0.1716 & 0.4265 & 0.3049 & 0.5551 & 0.3457 & 0.1459 & 0.3723 & 0.2298 & 0.5873 & 0.2987 \\
        \midrule
        \rowcolor{gray!8}
        \multicolumn{21}{@{}c}{\textit{\textbf{Training-free LLM-based Recommenders}}} \\
        \midrule
        LLMRank & 0.0501 & 0.2265 & 0.1373 & 0.5995 & 0.2569 & 0.0417 & 0.2745 & 0.1533 & 0.6017 & 0.2584 & 0.0713 & 0.3102 & 0.1872 & 0.5576 & 0.2665 & 0.0403 & 0.2745 & 0.1519 & 0.6660 & 0.2767 \\
        InteRecAgent & 0.0782 & 0.2547 & 0.1660 & 0.4857 & 0.2392 & 0.2063 & 0.4735 & 0.3433 & 0.6515 & 0.4004 & 0.1911 & 0.4467 & 0.3211 & 0.6056 & 0.3717 & 0.1782 & 0.4107 & 0.3009 & 0.5599 & 0.3477 \\
        \midrule
        \rowcolor{gray!8} 
        \multicolumn{21}{@{}c}{\textit{\textbf{Trainable LLM-based Recommenders}}} \\
        \midrule
        TALLRec & 0.1144 & 0.3539 & 0.2313 & 0.6007 & 0.3102 & 0.0649 & 0.2683 & 0.1618 & 0.5147 & 0.2406 & 0.1539 & 0.3592 & 0.2543 & 0.5522 & 0.3318 & 0.1362 & 0.3819 & 0.2608 & 0.6468 & 0.3474 \\
        LLaRA & 0.2324 & 0.5172 & 0.3736 & 0.7234 & 0.4394 & 0.2512 & 0.5404 & 0.4043 & 0.6833 & 0.4510 & 0.2416 & 0.5007 & 0.3928 & 0.6496 & 0.4398 & 0.2763 & 0.6199 & 0.4588 & 0.7332 & 0.4952 \\
        S-DPO & 0.1764 & 0.4183 & 0.2988 & 0.6293 & 0.3669 & 0.0723 & 0.2671 & 0.1657 & 0.5220 & 0.2475 & 0.1573 & 0.3528 & 0.2513 & 0.5421 & 0.3522 & 0.2006 & 0.4529 & 0.3292 & 0.6967 & 0.4079 \\
        ReRe & 0.2073 & 0.4648 & 0.3371 & 0.6901 & 0.4080 & 0.2584 & 0.5049 & 0.3900 & 0.6765 & 0.4447 & \textbf{0.2601} & 0.5052 & 0.3970 & 0.6962 & 0.4579 & 0.2921 & 0.5528 & 0.4473 & 0.7102 & 0.4976 \\
        \midrule
        \textbf{\ours} & \textbf{0.2992} & \textbf{0.6472} & \textbf{0.4795} & \textbf{0.8093} & \textbf{0.5324} & \textbf{0.2586} & \textbf{0.6115} & \textbf{0.4393} & \textbf{0.8052} & \textbf{0.5021} & 0.2494 & \textbf{0.5755} & \textbf{0.4123} & \textbf{0.7773} & \textbf{0.4775} & \textbf{0.3282} & \textbf{0.6468} & \textbf{0.4887} & \textbf{0.8157} & \textbf{0.5445} \\
        \bottomrule
      \end{tabular}
    }
  }
\end{table*}

\begin{table}[t]
  \centering
  \small
  \caption{Ablation study of variants of \ours. R: pure \underline{r}easoning by LLM, TIRR: \underline{t}ool-\underline{i}ntegrated \underline{r}easoning \underline{r}ecommendation.}
  \label{tab:ablation}
  \renewcommand{\arraystretch}{0.8} 
  \setlength{\tabcolsep}{5pt}
  \resizebox{\linewidth}{!}{
  \begin{tabular}{lllccc}
    \toprule
    \textbf{Data} & \textbf{Setting} & \textbf{Variant} & \textbf{H@1} & \textbf{H@5} & \textbf{N@5} \\
    \midrule
    \multirow{5}{*}{CDs} & 
    \multirow{2}{*}{Frozen} & R & 0.1907 & 0.4696 & 0.3303 \\
    & & TIRR & 0.2331 & 0.5476 & 0.3955 \\
    \cmidrule(lr){2-6} & 
    \multirow{3}{*}{Trained} & R & 0.2324 & 0.5399 & 0.3865 \\
    & & TIRR (RTA) & 0.2837 & 0.6162 & 0.4606 \\
    & & TIRR (RTA+PPR) & \textbf{0.2992} & \textbf{0.6472} & \textbf{0.4795} \\
    \midrule
    \multirow{5}{*}{Instr.} 
    & \multirow{2}{*}{Frozen} & R & 0.1973 & 0.5613 & 0.3818 \\
    & & TIRR & 0.1948 & 0.5277 & 0.3673 \\
    \cmidrule(lr){2-6}
    & \multirow{3}{*}{Trained} & R & 0.2328 & 0.5797 & 0.4141 \\
    & & TIRR (RTA) & 0.2463 & 0.6091 & 0.4321 \\
    & & TIRR (RTA+PPR) & \textbf{0.2586} & \textbf{0.6115} & \textbf{0.4393} \\
    \midrule
    \multirow{5}{*}{Office} 
    & \multirow{2}{*}{Frozen} & R & 0.2147 & 0.5017 & 0.3640 \\
    & & TIRR & 0.1963 & 0.4913 & 0.3471 \\
    \cmidrule(lr){2-6}
    & \multirow{3}{*}{Trained} & R & 0.2276 & 0.5374 & 0.4004 \\
    & & TIRR (RTA) & 0.2326 & 0.5572 & 0.4040 \\
    & & TIRR (RTA+PPR) & \textbf{0.2494} & \textbf{0.5755} & \textbf{0.4123} \\
    \bottomrule
  \end{tabular}
  }
\end{table}

\paragraph{Task Setting}
We design the task setting according to the problem definition in Sec.~\ref{sec:problem_definition}. 
Specifically, following common practice in prior works~\cite{llara,S-DPO}, each sample contains a candidate set of 20 items, including one ground-truth next item as the positive item and 19 negative items randomly sampled from the item pool. 
Given the textual input of user behavior records, the model is required to generate a top-10 ranked list from the candidate set.

\paragraph{Datasets and Metrics}
All training and evaluation samples are constructed from four subsets of the Amazon Reviews 2023\footnote{\url{https://amazon-reviews-2023.github.io/}}: CDs, Instruments, Office, and Games. Recommendation quality is evaluated by NDCG@K (N@K) and HitRate@K (H@K) of the user's actual purchased item in the generated list, where $K \in \{1, 5, 10\}$.

\paragraph{Backbone LLM and Baselines}
We use Qwen3-4B-Instruct-2507~\cite{Qwen3} as \ours's backbone LLM. We compare \ours with baselines covering conventional sequential recommenders, training-free LLM-based recommenders, and trainable LLM-based recommenders, to evaluate its performance.

\par\addvspace{0.5em}\noindent
Full experimental details are in Appendix~\ref{app:experimental_details}.

\subsection{Main Results}
\label{sec:main_results}


Table~\ref{tab:overall} reports the overall comparison on four Amazon subsets. Overall, \ours achieves the best performance on most datasets and metrics, obtaining the leading results in 19 out of 20 columns. The consistent improvements on both H@K and N@K show that \ours improves both item matching and ranking quality.

The comparison among different groups shows that the world knowledge and reasoning ability of LLMs can be useful for recommendation, that post-training with recommendation data can further improve the effectiveness of LLM-based recommenders, and that the agent paradigm is also beneficial. By combining these advantages in a tool-integrated agent framework and optimizing it with reinforcement learning, \ours achieves stronger overall performance, highlighting the potential of trained recommender agents.

\subsection{Ablation Study}
\label{sec:ablation_study}

\begin{figure}[t]
    \centering
    \includegraphics[width=1\linewidth]{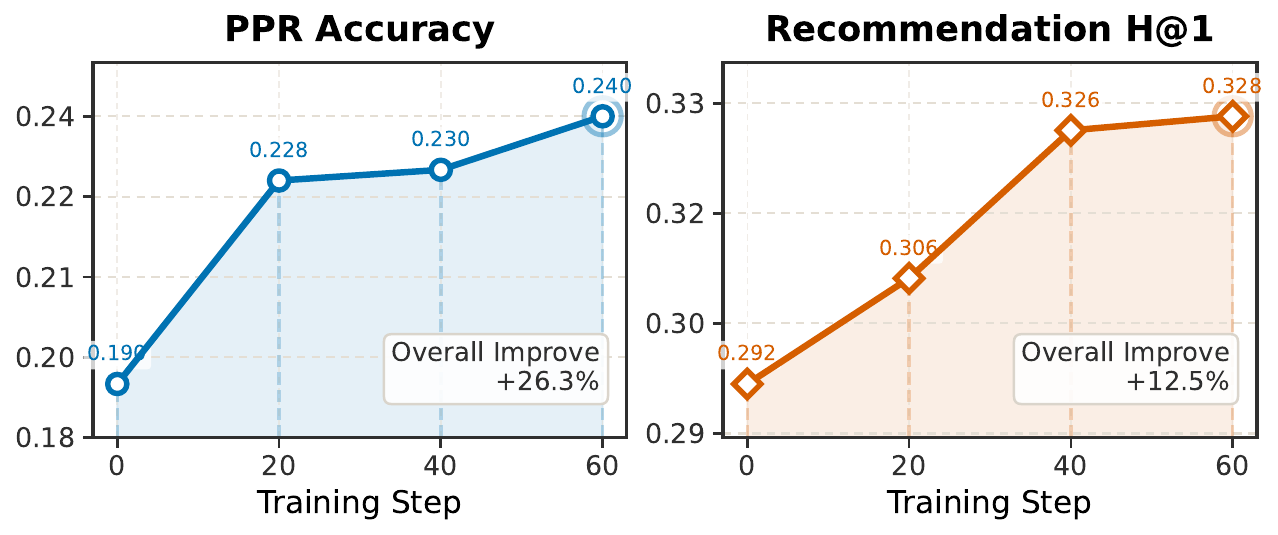}
    \caption{Performance changes during PPR training on the Games dataset, including the agent's preference judgment accuracy on the PPR test set and H@1 on the final recommendation test set.}
    \label{fig:ppr_training}
\end{figure}

\begin{figure}
    \centering
    \includegraphics[width=1\linewidth]{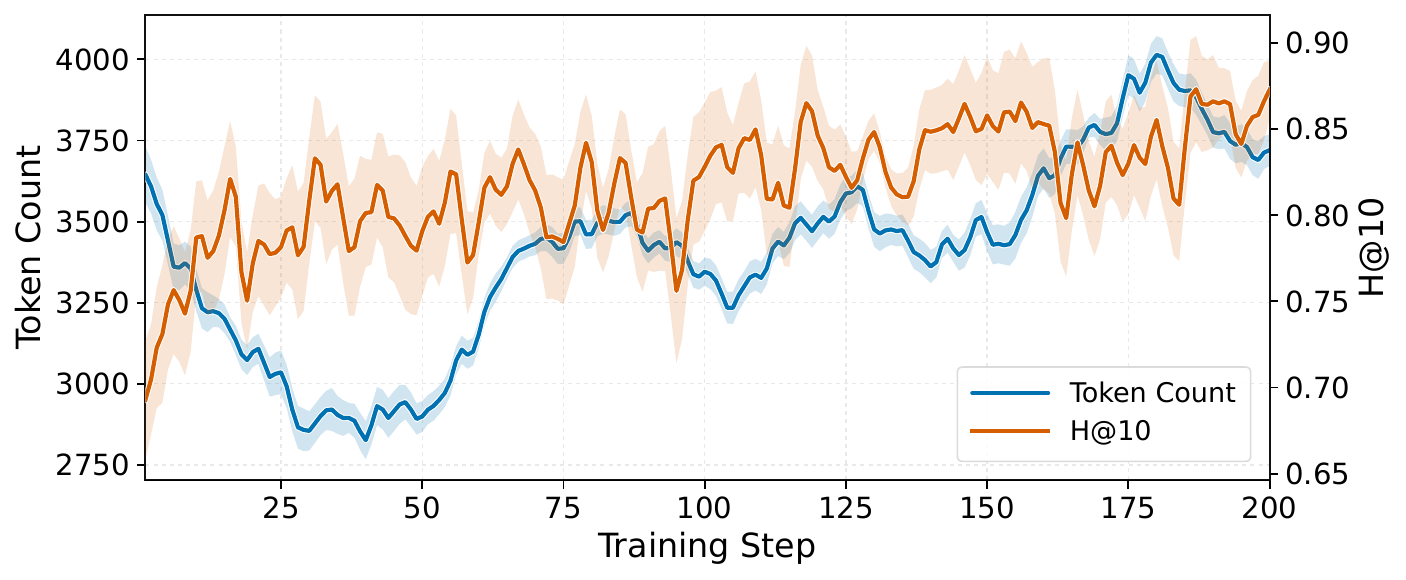}
    \caption{Dynamics of token count and average H@10 of rollout trajectories during first-stage training on the Games dataset.}
    \label{fig:token_vs_h10}
\end{figure}

\begin{figure*}[t]
    \centering
    \includegraphics[width=1\linewidth]{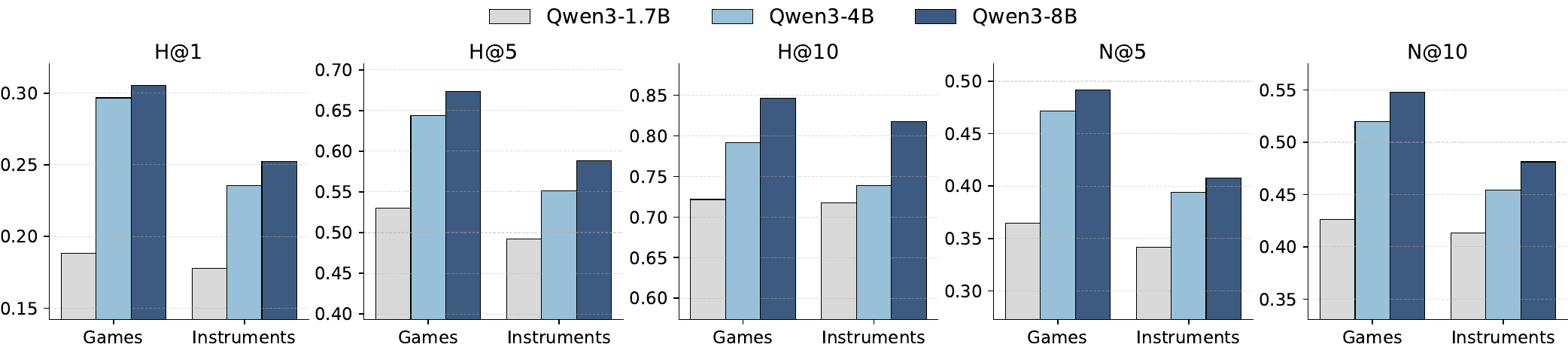}
    \caption{Performance of \ours with varying backbone sizes.}
    \label{fig:exp-model}
\end{figure*}

\begin{figure*}[t]
    \centering
    \includegraphics[width=1\linewidth]{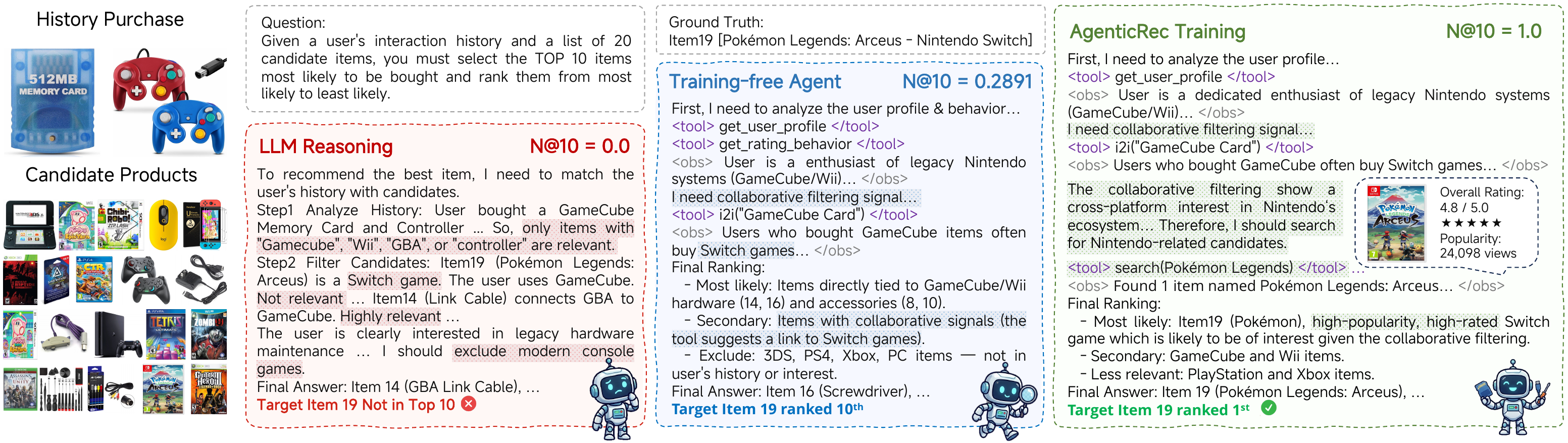}
    \caption{An example showing the effectiveness of \ours. Textured highlights denote critical reasoning steps that most significantly influence the final ranking and illustrate the differences between methods.}
    \label{fig:casestudy}
\end{figure*}

We conduct an ablation study to examine the effectiveness of the two stage training paradigm and the overall advantage of the agentic framework. As shown in Tab.~\ref{tab:ablation}:
\begin{itemize}[leftmargin=10pt,topsep=3pt,itemsep=0.2pt]
    \item Training consistently improves the performance of both R and TIRR. This verifies that optimizing the model with recommendation-specific signals can effectively enhance recommendation quality, while the frozen remain suboptimal.

    \item The two-stage training paradigm is effective. Compared with RTA, RTA+PPR further improves the final recommendation performance, showing that PPR brings real gains on top of RTA. By replaying self-bootstrapped ranking errors and optimizing the agent with bidirectional preference reasoning, PPR refines the agent's preference boundary and transfers this ability to final recommendation.

    \item The agentic setting mostly achieves better performance than R, validating the advantage of tool-integrated reasoning recommendation. However, TIRR underperforms R on Instruments and Office in the frozen setting, which is expected: without training, tool invocation relies only on language priors and may introduce noisy evidence through suboptimal tool calls. After training, TIRR consistently outperforms R across all datasets, highlighting the necessity of our training method.
\end{itemize}

\subsection{Further Analysis}
\label{sec:further_analysis}

\paragraph{Effect of PPR Training}
To further examine the effect of PPR, we track the agent's performance during PPR training in Fig.~\ref{fig:ppr_training}.
During PPR training, the agent's preference judgment accuracy on the PPR test set continues to increase.
Meanwhile, the agent's H@1 on the final recommendation test set also steadily improves.
This verifies that PPR can enhance the model's ability to distinguish fine-grained preferences.
It also suggests that the agent implicitly generalizes a more versatile ability, leading to more accurate recommendations.

\paragraph{Token Count}
Fig.~\ref{fig:token_vs_h10} illustrates the dynamics of token count and average H@10 of rollout trajectories during first-stage training on the Games dataset.
During training, the agent is encouraged to produce longer reasoning chains and conduct deeper exploration, leading to recommendations that better satisfy user preferences.
Meanwhile, the dialogue token cost remains within an acceptable range. More statistics about token cost are in Appendix~\ref{app:addition_exp}.

\paragraph{Latency}
We report the average latency per recommendation request on a single NVIDIA A800 GPU using vLLM. As shown in Table~\ref{tab:agent_latency}, the latency remains practical for an agentic recommendation framework. Unlike conventional recommendation pipelines optimized for strict real-time ranking, \ours targets interactive and reasoning-intensive recommendation scenarios, where richer reasoning and evidence-grounded decision making are often preferred over millisecond-level response latency.

\begin{table}[t]
    \centering
    \small
    \setlength{\tabcolsep}{5pt}
    \renewcommand{\arraystretch}{0.9}
    \begin{tabular}{@{}lrr@{}}
    \toprule
    Data & Avg. Latency (s) & Avg. Tool Time (s) \\
    \midrule
    CDs & 16.233 & 0.010 \\
    Games & 18.260 & 0.022 \\
    Office & 20.678 & 0.019 \\
    Instruments & 17.074 & 0.013 \\
    \bottomrule
    \end{tabular}
    \caption{Average latency statistics for our method.}
    \label{tab:agent_latency}
\end{table}

\paragraph{Scaling Backbone LLM}
To examine the scalability of \ours with respect to model capacity, we conduct experiments using Qwen3~\cite{Qwen3} across varying sizes (1.7B, 4B, and 8B). 
As shown in Fig.~\ref{fig:exp-model}, performance improves consistently as model size increases across datasets and evaluation metrics, indicating that \ours can effectively utilize increased model capacity to enhance performance.

\subsection{Case Study}
\label{sec:case_study}

To intuitively demonstrate the effectiveness of \ours, we present a sample from the Games dataset in Fig.~\ref{fig:casestudy}. The user history consists of legacy Nintendo purchases (GameCube accessories), while the actual next item purchased is \textit{Pokémon Legends: Arceus}, a game for the Nintendo Switch. The vanilla reasoning LLM exhibits an over-reliance on literal string matching (e.g., strictly filtering for keywords like \textit{GameCube}), failing to mine latent interests or grasp the intrinsic semantics of items. By invoking tools, the training-free agent utilizes collaborative signals to improve ranking quality; however, without RTA training and PPR refinement, it fails to further explore nuanced differences among items and remains trapped by literal semantics in the final decision, resulting in suboptimal performance.

In contrast, \ours leverages collaborative signals to astutely infer the user's latent interest in the Nintendo ecosystem. Building on this insight, it proactively invokes tools to acquire detailed specifications of promising candidates, enabling fine-grained differentiation for ranking. Ultimately, it flexibly synthesizes multi-source information to deliver precise recommendations.
\section{Conclusion}
\label{sec:conclusion}

In this paper, we proposed \ours, which casts agentic recommendation as tool-integrated reasoning workflow and optimizes it via a two-stage paradigm: Recommendation-Oriented Trajectory Activation activates agentic recommendation ability, while Progressive Preference Refinement sharpens preference discrimination. Experiments confirm the effectiveness of \ours.
\section*{Limitations}



Limited by computational resources, we did not conduct experiments with larger-scale LLM backbones, such as 72B-level models. Our preliminary experiments show that \ours benefits from increased backbone capacity, and we expect stronger abilities to emerge when \ours is instantiated with larger LLMs. In addition, this work focuses on optimizing recommender agents from offline interaction logs. A promising future direction is to further explore self-evolving recommender agents to improve the agent's adaptability in dynamic recommendation environments.

\bibliography{ref}

\newpage
\appendix
\section{Prompts Used in \ours}
\label{app:prompt}

Following are the prompts used in \ours:
\definecolor{blue}{RGB}{34, 139, 230}
\begin{tcolorbox}[
    colframe = gray,       
    colback = gray!5!white,             
    coltitle = white,                   
    coltext = black,                    
    fonttitle = \bfseries,              
    title = System Prompt for \ours,  
    boxrule = 1pt,                      
    arc = 2mm,                          
    width = \linewidth,                 
    left = 7pt,                         
    right = 7pt,                        
    top = 5pt,                          
    bottom = 5pt                        
]
\fontsize{8.5pt}{10pt}\selectfont
You are a helpful recommendation system assistant equipped with external tools. You can optionally call the tools to assist with the task. \par

\vspace{8pt}
\textbf{Tools} \par
\vspace{3pt}

You may call one or more functions to assist with the user query. You are provided with function signatures within \texttt{<tools></tools>} XML tags: \par

\vspace{5pt}
\texttt{<tools>} \par
\textcolor{blue}{\texttt{\{tools\_schema\}}} \par
\texttt{</tools>} \par

\vspace{8pt}
For each function call, return a json object with function name and arguments within \texttt{<tool\_call></tool\_call>} XML tags: \par

\vspace{5pt}
\texttt{<tool\_call>} \par
\texttt{\{} \par
\hspace{1.5em} \texttt{"name": <function-name>,} \par
\hspace{1.5em} \texttt{"arguments": <args-json-object>} \par
\texttt{\}} \par
\texttt{</tool\_call>}
\end{tcolorbox}

\begin{tcolorbox}[
    colframe = gray,       
    colback = gray!5!white,             
    coltitle = white,                   
    coltext = black,                    
    fonttitle = \bfseries,              
    title = Prompt for Recommendation,  
    boxrule = 1pt,                      
    arc = 2mm,                          
    width = \linewidth,                 
    left = 7pt,                         
    right = 7pt,                        
    top = 5pt,                          
    bottom = 5pt                        
]
\fontsize{8.5pt}{10pt}\selectfont
Solve the following problem step by step. You now have the ability to selectively invoke tools to gather information to assist your reasoning. \par
The last part of your response should be in the following format: \par
\vspace{3pt}
<answer> \par
\texttt{\textbackslash boxed\{'The final answer goes here.'\}}\par
</answer>

\vspace{5pt}
\textbf{User question:} \par
Given a user's interaction history and a list of 20 candidate items, you must select the TOP 10 items most likely to be bought and rank them from most likely to least likely.\par

\vspace{5pt}
\textbf{Answer Format:} \par
1. Output ONLY the \texttt{\textbackslash boxed\{[...]\}} result with no additional commentary in your final answer. \par
\quad - Example: \texttt{\textbackslash boxed\{[3, 17, 8, 1, 12, 5, 19, 2, 14, 6]\}} \par
2. Your final output MUST include exactly 10 item indices (no more, no less). \par
3. Each index must appear exactly once (no duplicates). \par
4. All indices must be valid (between 1 and 20). \par

\vspace{8pt}
Here is the history of items the user has bought: \par
\textcolor{blue}{\texttt{\{history\_items\_str\}}}

\vspace{8pt}
Here is the list of candidate items: \par
\textcolor{blue}{\texttt{\{candidate\_items\_str\}}}

\vspace{8pt}
Remember to place the final answer in the last part using the format: \par
<answer> \par
\texttt{\textbackslash boxed\{'The final answer goes here.'\}} \par
</answer>
\end{tcolorbox}

\begin{tcolorbox}[
    colframe = gray,       
    colback = gray!5!white,             
    coltitle = white,                   
    coltext = black,                    
    fonttitle = \bfseries,              
    title = Prompt for PPR Training,  
    boxrule = 1pt,                      
    arc = 2mm,                          
    width = \linewidth,                 
    left = 7pt,                         
    right = 7pt,                        
    top = 5pt,                          
    bottom = 5pt                        
]
\fontsize{8.5pt}{10pt}\selectfont
Solve the following problem step by step. You need to selectively invoke tools to gather information to assist your reasoning. \par
The last part of your response should be in the following format: \par

\vspace{3pt}
<answer>\par
\texttt{\textbackslash boxed\{'The final answer goes here.'\}}\par
</answer>

\vspace{8pt}
\textbf{User question:} \par
Given a user's interaction history and two candidate items, you must determine which item the user is \textcolor{blue}{\textbf{[MORE / LESS]}} LIKELY to buy next. \par

\vspace{5pt}
\textbf{Answer Format:} \par
1. Output ONLY the \texttt{\textbackslash boxed\{A\}} or \texttt{\textbackslash boxed\{B\}} result in your final answer. \par
2. Your answer must be exactly one of: A or B. \par

\vspace{8pt}
Here is the history of items the user has bought: \par
\textcolor{blue}{\texttt{\{history\_items\_str\}}}

\vspace{8pt}
Here are the two options: \par
Item A: \textcolor{blue}{\texttt{\{item\_a\_info\}}} \par
Item B: \textcolor{blue}{\texttt{\{item\_b\_info\}}}

\vspace{8pt}
Remember to place the final answer in the last part using the format: \par
<answer> \par
\texttt{\textbackslash boxed\{'The final answer goes here.'\}} \par
</answer>
\end{tcolorbox}

\section{Theoretical Analysis}
\label{app:theoretical_analysis}

\subsection{Directional Unbiasedness of RTA Gradient}
\label{app:unbias}

The RTA gradient estimator theoretically remains directionally unbiased up to a positive scaling factor under the group-average baseline:
\begin{proposition}[Directional Unbiasedness of RTA Gradient]
\label{prop:unbias}
    The RTA gradient estimator, which utilizes the list-wise ranking metric $R(r_K, y)$ as the reward and the inclusive group average ranking score as the baseline, provides a directionally unbiased estimate of the gradient for the expected list-wise utility objective $J(\theta) = \mathbb{E}_{\tau \sim \pi_\theta}[R(r_K, y)]$, up to the positive scaling factor $\frac{G-1}{G}$.
\end{proposition}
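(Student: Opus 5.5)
We prove the claim by computing the expectation of the estimator in closed form, using only the score-function identity and independence of the rollouts. Fix a training instance $(x_u,\mathcal{C},y)$ and write $R_i = R(r_K^{(i)},y)$ for the reward of the $i$-th of $G$ i.i.d.\ rollouts $\tau_1,\dots,\tau_G\sim\pi_\theta$. Write $s_i=\nabla_\theta\log\pi_\theta(\tau_i)$ for its score. The inclusive baseline is $\bar R=\frac1G\sum_{j=1}^G R_j$. We analyze the un-normalized estimator
\[
\hat g=\frac1G\sum_{i=1}^G (R_i-\bar R)\,s_i .
\]
This omits the standard-deviation normalization, clipping and KL term, i.e.\ it is the on-policy, first-order form of the GRPO update. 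Tool observations are produced by the environment and do not depend on $\theta$, so $\log\pi_\theta(\tau)$ is the sum of log-probabilities of the agent-generated tokens only. Hence the score-function identity $\nabla_\theta J=\mathbb{E}_{\tau}[R(\tau)\nabla_\theta\log\pi_\theta(\tau)]$ holds for the whole tool-integrated trajectory.

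First, expand $(R_i-\bar R)s_i=\frac{G-1}{G}R_i s_i-\frac1G\sum_{j\neq i}R_j s_i$. Take expectations term by term.
\begin{itemize}
\item The diagonal term gives $\mathbb{E}[R_i s_i]=\nabla_\theta J(\theta)$, by the score-function identity.
\item For $j\neq i$, independence of $\tau_i$ and $\tau_j$ gives $\mathbb{E}[R_j s_i]=\mathbb{E}[R_j]\,\mathbb{E}[s_i]$. This vanishes because $\mathbb{E}_{\tau\sim\pi_\theta}[\nabla_\theta\log\pi_\theta(\tau)]=\nabla_\theta\sum_\tau\pi_\theta(\tau)=\nabla_\theta 1=0$.
\end{itemize}
Summing over $i$ and dividing by $G$ yields $\mathbb{E}[\hat g]=\frac{G-1}{G}\nabla_\theta J(\theta)$. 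Since $\frac{G-1}{G}>0$ for $G\ge2$, the estimator is directionally unbiased, which is the claim of Proposition~\ref{prop:unbias}.

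Next, check that the piecewise reward of Eq.~\ref{eq:main_reward} causes no trouble, and set aside the dynamic-sampling step. The reward takes values $\mathrm{NDCG}@K$ (possibly $+0.1$), $-0.5$ or $-1$. It is a bounded, $\theta$-independent function of the trajectory, so all expectations are finite and the argument is unaffected. The objective $J$ should be read as the expectation of this shaped reward; it coincides with list-wise utility on valid outputs that contain $y$.

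Dynamic sampling discards groups in which all rewards are negative. For such a group $\hat g$ need not be zero, because a group with distinct negative rewards still has nonzero centered rewards, so this conditioning could in principle change the expectation. We therefore state the result for the unfiltered estimator and remark that filtering is a data-selection heuristic outside the proposition's scope.

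The main obstacle is how to treat the std normalization used in standard GRPO. Dividing by the group standard deviation couples all samples nonlinearly, and the cross terms no longer factor. I would first try to state the proposition for the mean-centered estimator only, which is presumably what the authors intend by "group average ranking score as the baseline". I would then note in a remark that the normalized variant is biased in general but preserves the sign structure of the advantages.
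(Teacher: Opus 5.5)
Your proof is correct and follows the paper's argument essentially step for step. You split the inclusive baseline so that its own-sample term removes $\frac{1}{G}\nabla J(\theta)$, you kill the cross terms $\mathbb{E}[R_j]\,\mathbb{E}[s_i]$ using independence and the zero-mean score identity, and you collect $\frac{G-1}{G}\nabla J(\theta)$. Your scoping remarks are accurate caveats that the paper's proof leaves implicit: environment observations drop out of the score, and you exclude std normalization and the all-negative dynamic-sampling filter, which really can shift the expectation here because mixed $-0.5$ and $-1$ rewards give nonzero centered advantages.
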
 
Proposition~\ref{prop:unbias} confirms that the computed gradient direction aligns mathematically with maximizing the expected list-wise utility (i.e., NDCG@K). The inclusive baseline introduces only a positive scaling factor $\frac{G-1}{G}$ when $G>1$, which acts as a learning rate dampener without altering the optimization direction. This directional unbiasedness is critical for \ours for two reasons: (1) First, it ensures accurate credit assignment for tool usage, guaranteeing that the ``credit'' from a high-quality ranking is correctly back-propagated to the specific intermediate reasoning and tool-invocation steps, thereby driving the agent to learn outcome-driven behaviors. (2) Second, it provides stability in sparse feedback environments; by anchoring the gradient update to a group baseline, the optimization focuses on relative list improvement rather than unstable absolute scores, which is essential for convergence when learning from noisy implicit ranking signals.

\begin{proof}
In \ours, the optimization goal is to maximize the expected quality of the generated top-$K$ list $r_K$ given user context $x_u$. The trajectory $\tau$ encompasses the entire reasoning chain, including tool invocations (Act) and observations (Obs), culminating in the final ranking action $r_K$. Thus, the objective is:
\begin{equation}
\nabla J(\theta) = \mathbb{E}_{\tau \sim \pi_\theta} [R(r_K, y) \cdot \nabla \log \pi_\theta(\tau)]
\end{equation}
where $R(r_K, y)$ is the list-wise metric (i.e., NDCG@K) derived from implicit feedback $y$ (Eq.~\ref{eq:rankreward}).

To address the high variance inherent in list-wise ranking rewards, we employ the GRPO estimator using a group of $G$ sampled trajectories $\{\tau^{(g)}\}_{g=1}^G$. The estimated gradient is:
\begin{equation}
\begin{aligned}
        \nabla \hat{J}_{GRPO}(\theta) &= \frac{1}{G} \sum_{g=1}^{G} \\
        &(R(r_K^{(g)}, y) - b) \nabla \log \pi_\theta(\tau^{(g)})
\end{aligned}
\end{equation}
where the baseline $b = \frac{1}{G} \sum_{j=1}^G R(r_K^{(j)}, y)$ is the average list-wise utility of the sampled group.

With the inclusive group-average baseline, the baseline term does not vanish exactly because $b$ contains the reward of the same trajectory $\tau^{(g)}$. Let $R^{(g)}=R(r_K^{(g)}, y)$ and $z^{(g)}=\nabla \log \pi_\theta(\tau^{(g)})$. Using the likelihood ratio identity, we have:
\begin{equation}
    \label{eq:baseline}
    \resizebox{\linewidth}{!}{$
    \begin{aligned}
        \mathbb{E}_{\tau} [\nabla \log \pi_\theta(\tau)] &= \int \pi_\theta(\tau) \cdot \nabla \log \pi_\theta(\tau) \, d\tau \\
        &= \int \pi_\theta(\tau) \cdot \frac{\nabla \pi_\theta(\tau)}{\pi_\theta(\tau)} \, d\tau \\
        &= \int \nabla \pi_\theta(\tau) \, d\tau \\
        &= \nabla \left( \int \pi_\theta(\tau) \, d\tau \right) \\
        &= \nabla (1) = 0,
    \end{aligned}
    $}
\end{equation}
where $\int \pi_\theta(\tau) \, d\tau = 1$ since the integral of any probability distribution over its entire domain must sum to 1.

Therefore, for each $g$,
\begin{equation}
\resizebox{\linewidth}{!}{$
\begin{aligned}
\mathbb{E}[b \cdot z^{(g)}]
&= \mathbb{E}\left[\frac{1}{G}\sum_{j=1}^{G} R^{(j)} z^{(g)}\right] \\
&= \frac{1}{G}\mathbb{E}[R^{(g)} z^{(g)}] + \frac{1}{G}\sum_{j\neq g}\mathbb{E}[R^{(j)}]\mathbb{E}[z^{(g)}] \\
&= \frac{1}{G}\nabla J(\theta).
\end{aligned}
$}
\end{equation}
It follows that
\begin{equation}
\resizebox{\linewidth}{!}{$
\begin{aligned}
\mathbb{E}[\nabla \hat{J}_{GRPO}(\theta)]
&= \frac{1}{G}\sum_{g=1}^{G}\left(\mathbb{E}[R^{(g)}z^{(g)}]-\mathbb{E}[b z^{(g)}]\right) \\
&= \left(1-\frac{1}{G}\right)\nabla J(\theta) \\
&= \frac{G-1}{G}\nabla J(\theta).
\end{aligned}
$}
\end{equation}
Since $G>1$, $\frac{G-1}{G}$ is strictly positive. Thus, the gradient direction remains aligned with the true objective of maximizing NDCG@K, while the group-relative subtraction $(R^{(g)} - b)$ reduces variance by focusing on the relative ranking quality within the group.
\end{proof}

\subsection{Error Bound Minimization via Bidirectional Preference Reasoning}
\label{app:bidirectional}

The bidirectional preference reasoning in PPR stage can theoretically sharpen the fine-grained preference boundaries among highly confusable candidates:  
\begin{proposition}[Error Bound Minimization via Bidirectional Preference Reasoning]
\label{prop:bidirectional}
    Optimizing the bidirectional preference reasoning objective on mined hard negative pairs $(c^+, c^-)$ minimizes the upper bound of the pairwise ranking error probability $P\left(rank_{r_K}(c^-) < rank_{r_K}(c^+)\right)$.
\end{proposition}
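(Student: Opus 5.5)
We formalize the pairwise decision as a score-difference variable and bound its error probability by a convex surrogate. For a mined hard pair $(c^+,c^-)$ under context $x_u$, let $s_\theta(x_u,c)$ be the agent's implicit preference score. Concretely, we take the log-probability the policy assigns to selecting $c$ in the positive-direction task. We then set the margin $\Delta_\theta = s_\theta(x_u,c^+) - s_\theta(x_u,c^-)$.

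\textbf{Step 1: reduce the ranking event to a margin event.} We assume the agent's list $r_K$ orders items consistently with $s_\theta$, as a modeling assumption. Then the ranking violation $\operatorname{rank}_{r_K}(c^-) < \operatorname{rank}_{r_K}(c^+)$ is exactly the event $\Delta_\theta \le 0$. Hence
\[
P\big(\operatorname{rank}_{r_K}(c^-) < \operatorname{rank}_{r_K}(c^+)\big) = \mathbb{E}\big[\mathbb{I}[\Delta_\theta \le 0]\big],
\]
where the expectation is over mined pairs and the sampling of trajectories.

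\textbf{Step 2: link the bidirectional rewards to a logistic surrogate.} In the two-item task, let the policy's probability of answering $c^+$ in the positive direction be $p^+ = \sigma(\Delta_\theta)$, a Bradley--Terry/softmax parameterization. Let its probability of answering $c^-$ in the negative (LESS likely) direction be $p^- = \sigma(\Delta'_\theta)$, where $\Delta'_\theta$ is the negative-direction margin. Under a consistency assumption that both directions share the same underlying preference score, we have $\Delta'_\theta = \Delta_\theta$. The expected binary rewards are then $\mathbb{E}[R^{+}_{\text{pair}}] = p^+$ and $\mathbb{E}[R^{-}_{\text{pair}}] = p^-$. Following the GRPO/maximum-likelihood view already justified by Proposition~\ref{prop:unbias}, maximizing these rewards is treated as decreasing the log-loss
\[
\mathcal{L}_{\text{bi}}(\theta) = -\log\sigma(\Delta_\theta) - \log\sigma(\Delta'_\theta) = 2\log\big(1+e^{-\Delta_\theta}\big).
\]
More carefully, we argue that the policy gradient of $\mathbb{E}[R^{\pm}_{\text{pair}}]$ points in the direction of increasing $\Delta_\theta$, and so in the direction of decreasing this convex loss.

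\textbf{Step 3: bound the indicator.} We use the elementary inequality $\mathbb{I}[\Delta\le 0] \le \log_2(1+e^{-\Delta})$, which holds because the right-hand side is at least $1$ whenever $\Delta\le 0$ and is nonnegative otherwise. Taking expectations gives
\[
P(\text{violation}) \le \tfrac{1}{2\ln 2}\,\mathbb{E}[\mathcal{L}_{\text{bi}}(\theta)].
\]
This is a convex upper bound in $\Delta_\theta$, and the PPR training drives it down. We will also remark on the role of the negative direction: it contributes a second, independent gradient signal on the same margin. The bound therefore tightens at least as fast as it would with one-sided training.

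\textbf{Main obstacle.} The delicate step is Step 2, namely identifying the agent's stochastic, tool-integrated decision with a score margin shared by both directions and by the top-$K$ ranking. This identification does not follow from earlier results and must be stated as an explicit modeling assumption (score-consistency of the policy). The fallback is a weaker claim: the bound holds for the induced pairwise preference probability $1-p^+$ directly, since $1-\sigma(\Delta) \le \log_2(1+e^{-\Delta})$. This fallback avoids any appeal to the list ranking.
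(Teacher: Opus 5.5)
Your proposal is correct and follows essentially the same route as the paper. It uses a shared score margin $\Delta s = s(c^+)-s(c^-)$ and Bradley--Terry probabilities for both directions; the paper writes $P_{neg}$ with the same scores negated, which is exactly your $\Delta'_\theta=\Delta_\theta$ consistency assumption. From there it forms the combined log-loss $2\log(1+e^{-\Delta s})$ and applies the bound $2\log(1+e^{-\Delta s}) \ge 2\log 2\cdot\mathbb{I}(\Delta s<0)$, which is your $\tfrac{1}{2\ln 2}$ inequality.

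Your version is, if anything, more careful in two ways. It states the score-consistency assumption explicitly, whereas the paper uses it silently when it equates a ranking violation with $s(c^-)>s(c^+)$. It also adds a valid fallback bound on $1-p^+$ that does not depend on the list ranking.
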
 
Proposition~\ref{prop:bidirectional} shows that, by explicitly training on ranking violations (where initially $\Delta s = s(c^+) - s(c^-) < 0$) and applying the bidirectional update, the agent maximizes the score margin $\Delta s$. The ``Negative Direction'' task provides an additional gradient flow explicitly penalizing the salient features of $c^-$ that caused the violation, thereby tightening the error bound more effectively than positive-only supervision alone. 

\begin{proof}
Let $s(c | x_u)$ be the agent's scoring potential for an item $c$ given context $x_u$ and we use $s(c)$ for short if there is no ambiguity. 
A ranking violation occurs if $s(c^-) > s(c^+)$ for a hard negative $c^-$. We define the score difference as:
\begin{equation}
    \Delta s = s(c^+) - s(c^-).
\end{equation}
A ranking error occurs when $\Delta s < 0$.

The fundamental goal of ranking optimization is to minimize the misordering of pairs. The discrete pairwise ranking error is given by the indicator function:
\begin{equation}
    \mathcal{L}_{0-1}(\Delta s) = \mathbb{I}(\Delta s < 0) = \begin{cases} 1 & \text{if } \Delta s < 0 \\ 0 & \text{if } \Delta s \ge 0 \end{cases}
\end{equation}
Directly optimizing $\mathcal{L}_{0-1}$ is intractable due to its discontinuity and zero gradients almost everywhere, which prevents effective gradient-based updates via the agent's policy.

For a pair $(c^+, c^-)$, our bidirectional preference reasoning creates two complementary tasks: 
\begin{itemize}[leftmargin=10pt,topsep=3pt,itemsep=0.2pt]
    \item \textbf{Positive Direction:} Maximize likelihood of choosing $c^+$. Let this probability be: 
    $$P_{pos} = \frac{e^{s(c^+)}}{e^{s(c^+)} + e^{s(c^-)}}$$
    
    \item \textbf{Negative Direction:} Maximize likelihood of rejecting $c^-$ (identifying it as ``less likely''). This effectively maximizes: $$P_{neg} = \frac{e^{-s(c^-)}}{e^{-s(c^+)} + e^{-s(c^-)}} = \frac{e^{s(c^+)}}{e^{s(c^+)} + e^{s(c^-)}}$$
\end{itemize}

Optimizing these tasks via GRPO is equivalent to minimizing the negative log-likelihood (cross-entropy) of the preference probability $P(c^+ \succ c^-)$. 
For the pair $(c^+, c^-)$, the combined loss $\mathcal{L}_{Bi}$ is proportional to:
\begin{equation}
\begin{aligned}
    \mathcal{L}_{Bi} &\approx - \log(P_{pos}) - \log(P_{neg}) \\
    &= - 2 \log \left( \frac{1}{1 + e^{-(s(c^+) - s(c^-))}} \right) \\
    &= 2\log(1 + e^{-\Delta s}).
\end{aligned}
\end{equation}
The above formulation represents the logistic loss.

We observe that the derived logistic loss function $\mathcal{L}(\Delta s) = 2\log(1 + e^{-\Delta s})$ serves as a convex upper bound to the 0-1 ranking error loss $\mathcal{L}_{0-1}$:
\begin{itemize}[leftmargin=10pt,topsep=5pt,itemsep=0.2pt]
    \item \textbf{Upper Bound:} For any $\Delta s$, $2\log(1 + e^{-\Delta s}) \ge 2\log(2) \cdot \mathbb{I}(\Delta s < 0)$. Specifically, when $\Delta s < 0$ (an error), the logistic loss grows linearly with the magnitude of the violation ($-\Delta s$), imposing a heavy penalty. When $\Delta s > 0$ (correct ranking), the loss approaches zero smoothly.

    \item \textbf{Convexity:} Unlike the step function $\mathbb{I}(\cdot)$, the logistic loss $\mathcal{L}(\Delta s)$ is smooth and convex. This ensures that the optimization landscape is well-behaved, providing non-vanishing gradients even when the ranking is currently incorrect (i.e., when $\Delta s \ll 0$).
\end{itemize}

By mining hard negatives where $\Delta s < 0$  and applying the bidirectional update, \ours effectively minimizes $2\log(1 + e^{-\Delta s})$. Since this function is a tight convex upper bound to the indicator error function, minimizing the bidirectional loss theoretically guarantees the minimization of the empirical pairwise ranking error rate, thereby sharpening the fine-grained preference boundaries among highly confusable candidates.
\end{proof}
\section{Experimental Details}
\label{app:experimental_details}

\subsection{Datasets and Preprocess}
\label{app:datasets_and_preprocess}

We conduct experiments on 4 subsets of the Amazon Reviews 2023 corpus~\cite{00020W0Z0LH025}, including CDs and Vinyl (CDs), Musical Instruments (Instruments), Office Products (Office), and Video Games (Games). To keep the training time tractable, we restrict the interaction data to the period from October 2022 to October 2023. We adopt a common chronological split with a ratio of 8:1:1 for training/validation/testing by timestamp, where interactions are sorted by time and split accordingly. The statistics of these processed datasets are summarized in Tab.~\ref{tab:datasets}.

For interaction behavior sequence construction, we follow \citet{llara} to cap the maximum length of user interaction history to 10.
For candidate construction, each instance includes one ground-truth next item and 19 randomly sampled negatives, forming 20 candidates in total. When sampling negatives, we exclude all items appearing in the user's historical interaction sequence to avoid overlap with previously interacted items. We then randomly shuffle the candidate items before feeding them to the model, so that the input order does not introduce positional bias.

\begin{table}[!ht]
\centering
\caption{Statistics of the preprocessed datasets.}
\resizebox{\linewidth}{!}{
    \begin{tabular}{lrrrr}
    \toprule
    \textbf{Dataset} & \textbf{\#Users} & \textbf{\#Items} & \textbf{\#Inters.} & \textbf{Sparsity} \\
    \midrule
    CDs & 5,437 & 8,785 & 13,826 & 99.71\% \\
    Instruments & 7,593 & 5,279 & 15,746 & 99.61\% \\
    Office & 27,130  & 11,511  & 47,333  & 99.85\% \\
    Games & 6,251 & 3,003 & 11,457 & 99.39\% \\
    \bottomrule
    \end{tabular}
}
\label{tab:datasets}
\end{table}

\subsection{Baselines}
\label{app:baselines}

\paragraph{Conventional Sequential Recommenders} include classic baselines Caser~\cite{Caser}, GRU4Rec~\cite{GRU4Rec}, and SASRec~\cite{SASRec}, which model user preferences from interaction sequences using convolutional, recurrent, and self-attention architectures, respectively. We further consider ReaRec~\cite{ReaRec}, an inference-enhanced recommender that introduces additional reasoning steps at test time.

\paragraph{Training-free LLM-based Recommenders} mainly rely on prompting or in-context reasoning without parameter updates. In particular, LLMRank~\cite{LLMRank} directly performs candidate ranking by prompting an off-the-shelf LLM with user history and item descriptions, serving as a strong zero-shot ranking baseline. InteRecAgent~\cite{InteRecAgent} represents autonomous recommender agents that leverage an LLM to interact with external tools (i.e., SASRec) for recommendation decision making, while remaining training-free for the LLM backbone.

\paragraph{Trainable LLM-based Recommenders} adapt LLMs to recommendation via fine-tuning or reinforcement learning. TALLRec~\cite{TALLRec} improves recommendation by tuning LLMs with recommendation-oriented supervision under limited data. LLaRA~\cite{llara} further integrates textual and ID-style representations to align LLMs with recommendation signals. S-DPO~\cite{S-DPO} introduces a multi-negative, softmax-style DPO objective to better leverage implicit preference data beyond positive-only supervision. ReRe~\cite{ReRe} applies constrained beam search and preference rewards to improve recommendation result.

\subsection{Implementation Details}
\label{app:implementations_details}

For \ours, We use Qwen3-4B-Instruct-2507~\cite{Qwen3} as the backbone. The training follows the two-stage training with 3 and 1 epochs, respectively, and we set the group size to 8, the maximum generation length to 6,144, the batch size to 64, the learning rate to $1\times10^{-6}$, and the rollout temperature to 0.6.

For sequential recommendation baselines, we use a batch size of 1,024 and optimize all models with binary cross-entropy loss and a learning rate of 0.001. 
For ReaRec, the number of reasoning steps during training is set to 3.
For training-free LLM-based baselines, we use GPT-4 as the LLM for all experiments. 
For trainable LLM-based baselines, to ensure a fair comparison, we adopt Qwen3-4B-Instruct-2507 as the backbone across all methods, the same as \ours, with a learning rate of $1\times10^{-5}$ and a batch size of 128, while keeping other hyperparameters and configurations as their default settings. 
Since TALLRec~\cite{TALLRec} was originally designed for point-wise ranking, we adapt it to our setting by reformulating the task as selecting the next item from a given candidate set. Similarly, ReRe~\cite{ReRe} is adapted to our setting by constraining the model to recommend from a given candidate set.
For baselines that explicitly output item relevance scores, we obtain the ranked list by scoring all candidate items and sorting them in descending order. For TALLRec, LLaRA, S-DPO and ReRe, we rank the candidate items according to their joint generation probability. 

All experiments are conducted on 4 NVIDIA A800 GPUs (80GB memory each).
\section{Additional Empirical Results}
\label{app:addition_exp}

\paragraph{Analysis on Tool Invocation}
Fig.~\ref{fig:training_dynamics}(a) shows the statistics of tool invocation during the first stage training (RTA) on Office. The tool invocation rate among positively rewarded trajectories (orange line) increases rapidly in early training and remains consistently high thereafter, indicating that the policy quickly learns to leverage tools for effective decision-making. Meanwhile, the average number of tool calls (blue line) gradually rises and then stabilizes, suggesting that the model develops a relatively stable tool use strategy rather than indiscriminately increasing tool calls.

Fig.~\ref{fig:training_dynamics}(b) shows the change of H@10 on training trajectories during the first stage training, reflecting steady improvements in the learned policy. The synchronized progression between stabilized tool use and increasing H@10 suggests that our training method brings more effective tool planning, gradually leading to better recommendation.

\begin{figure}[t]
  \centering
  \begin{subfigure}[b]{0.50\linewidth} %
    \includegraphics[width=\linewidth]{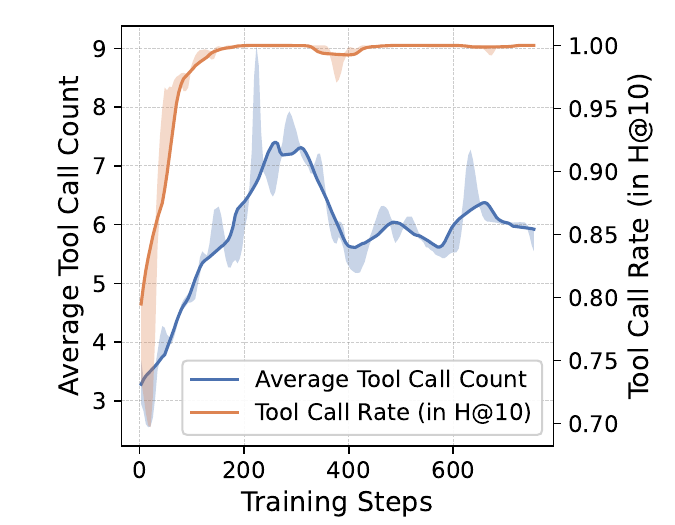} %
    \caption{Tool use} %
    \label{fig:tool-usage}
  \end{subfigure}%
  \hfill %
  \begin{subfigure}[b]{0.50\linewidth}
    \includegraphics[width=\linewidth]{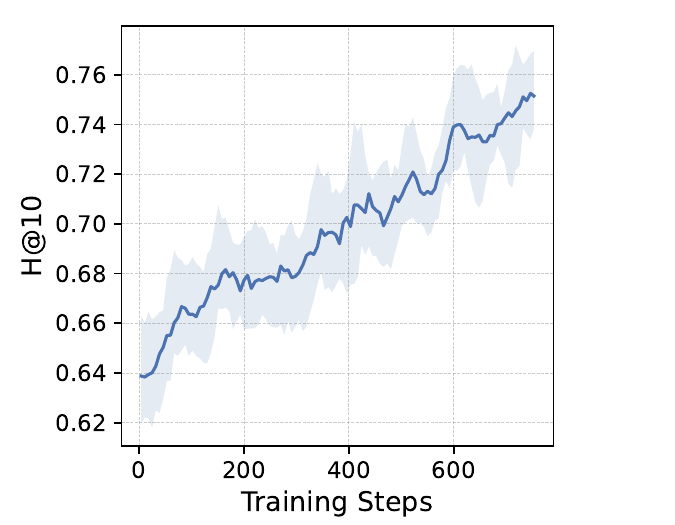} %
    \caption{Recommendation quality} %
    \label{fig:tool-score}
  \end{subfigure}
  \vspace{-20pt}
  \caption{Training statistics on the Office dataset: (a) tool usage statistics, where the blue line denotes the average number of tool calls per trajectory and the orange line indicates the percentage of positively rewarded trajectories that have tool invocation; (b) recommendation performance measured by H@10 during training.}
  \vspace{-5pt}
  \label{fig:training_dynamics}
\end{figure}

\paragraph{Analysis of Group Size}
As shown in Fig.~\ref{fig:exp-group}, increasing the group size in RTA improves recommendation performance by providing richer intra-group comparisons for more stable list-wise credit assignment. However, the gains gradually saturate with larger groups. Thus, a moderate group size offers a practical trade-off between training stability and efficiency.

\begin{figure}[t]
    \centering
    \includegraphics[width=1.0\linewidth]{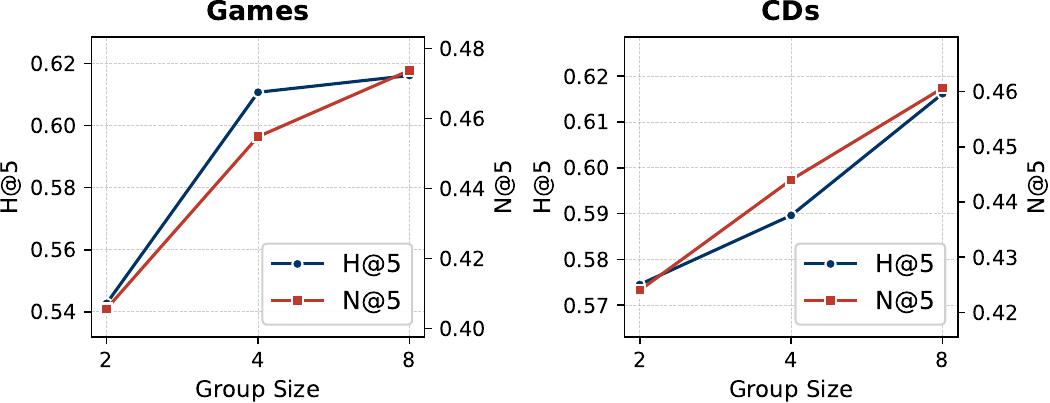}
    \vspace{-20pt}
    \caption{Effect of group size in RTA}
    \vspace{-10pt}
    \label{fig:exp-group}
\end{figure}

\paragraph{Token Cost}
Fig.~\ref{fig:token_statistics} reports the token cost statistics of \ours across datasets.
Naive LLM denotes the token consumption when directly using the LLM to reason and generate recommendation results.
\ours denotes the token count during recommendation generation after training under the tool-integrated reasoning recommendation paradigm, including \textit{Think}, \textit{Act}, \textit{Obs}, and \textit{Rec} tokens. Compared with standard LLM reasoning, the token cost of \ours remains relatively controllable.

\begin{figure}[t]
    \centering
    \includegraphics[width=1\linewidth]{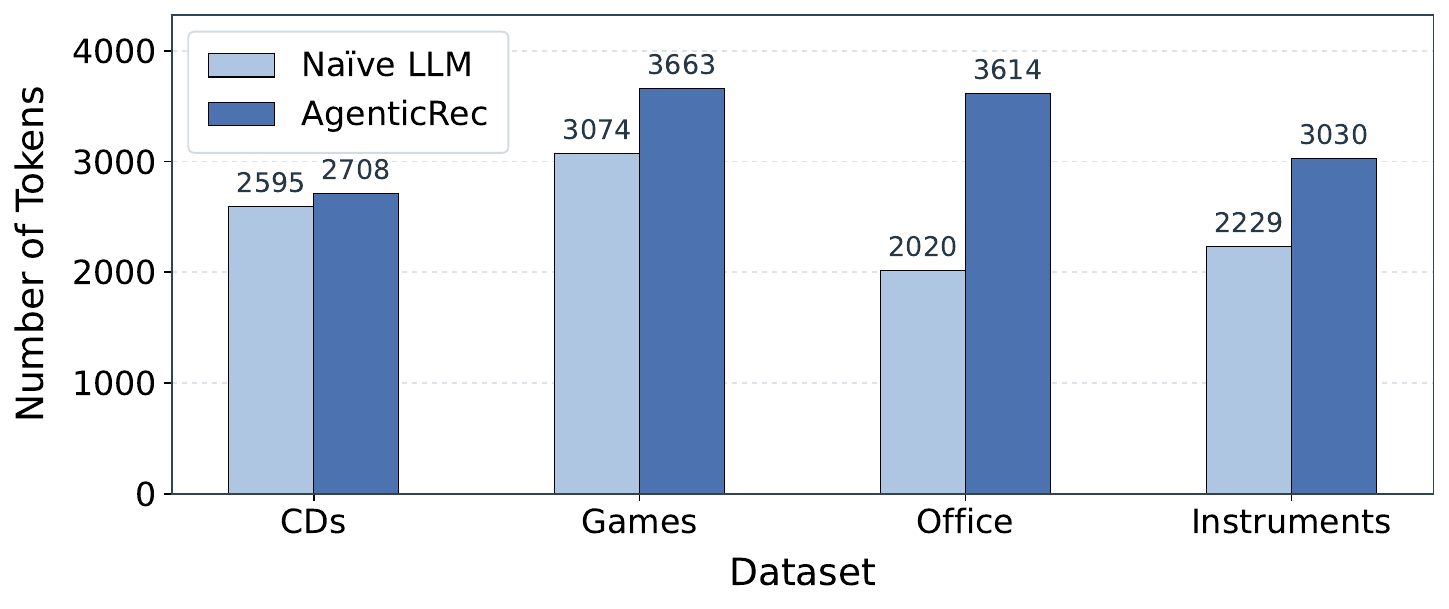}
    \caption{Full token cost statistics.}
    \label{fig:token_statistics}
\end{figure}

\section{Tool Example}
\label{app:tool}

The following are examples of tool invocation in \ours:
\begin{itemize}[leftmargin=10pt,topsep=5pt,itemsep=0.2pt]
    \item \textbf{User Profile Tool:} Fig. \ref{fig:user_profile_tool} illustrates the user profile tool. We utilize the Qwen3-4B-Instruct-2507 (consistent with our backbone) to pre-generate user profiles. The input consists of the user's interaction history prior to the timestamp of each sample in our processed dataset.
    
    \item \textbf{Item Information Tool:} As shown in Fig. \ref{fig:item_information_tool}, \texttt{item\_info\_search} tool retrieves detailed information about a specific item based on the dataset's item metadata. Additionally, we construct a candidate analysis function to support another tool \texttt{candidates\_analyze} that aggregates candidate items by category or price, providing a structured summary that enables the agent to quickly grasp the composition of the candidate list. This dual-granularity design facilitates high-quality list-wise ranking by combining specific item details with a candidate overview.
    
    \item \textbf{Behavioral Statistics Tool:} Fig. \ref{fig:behavioral_statistics_tool} illustrates the return information of this tool. Upon invocation, it computes statistics on relevant metadata fields to generate statistical summaries of user recent behavior. Specifically, for \textit{Session Behavior} (\texttt{get\_session\_behavior}), the user's dynamic short-term interest patterns are extracted using item metadata from recent sessions; meanwhile, for \textit{Rating Behavior} (\texttt{get\_rating\_behavior}), items rated by the user are categorized into high, neutral, and low groups to serve as explicit preference signals. 
    
    \item \textbf{Collaborative Information Tool:} Fig. \ref{fig:collaborative_information_tool} illustrates the information returned by the tool. We employ SASRec\cite{SASRec} on the training set to model collaborative information and construct the embedding space, which is utilized for collaborative retrieval.
\end{itemize}

\definecolor{MainFrameGray}{HTML}{808080} 
\definecolor{HeaderGray}{HTML}{E9F1F6}    
\definecolor{ToolBlue}{HTML}{2563EB}      
\definecolor{OutputGray}{HTML}{64748B}    
\definecolor{JSONKey}{HTML}{BE185D}       
\definecolor{CodeBg}{HTML}{f9f9f9}        
\lstdefinelanguage{modernjson}{
    basicstyle=\small\ttfamily\color{black},
    columns=fullflexible,
    breaklines=true,
    showstringspaces=false,
    stringstyle=\color{black!80},
    literate=
     *{:}{{{\color{ToolBlue}{:}}}}{1}
      {"}{{{\color{JSONKey}{" \kern-1.5pt}}}}{1} 
      {\{}{{{\color{black}{\{}}}}{1}
      {\}}{{\color{black}{\}}}}{1},
}
\tcbset{
    solidstyle/.style={
        enhanced,
        boxrule=1pt,           
        arc=2pt,               
        left=2mm, right=2mm, top=1mm, bottom=1mm,
        fonttitle=\small\bfseries\sffamily,
        shadow={0pt}{0pt}{0pt}{white} 
    }
}
\newtcblisting{toolcall}{
    solidstyle,
    colback=white,
    colframe=MainFrameGray,
    coltitle=white,
    colbacktitle=MainFrameGray,
    title={$\blacktriangleright$ Tool Call},
    listing only,
    listing options={
        language=modernjson,
        basicstyle=\small\ttfamily\linespread{0.8}\selectfont, 
        aboveskip=0pt, 
        belowskip=0pt  
    }
}
\newtcolorbox{tooloutput}{
    solidstyle,
    colback=white,
    colframe=MainFrameGray, 
    coltitle=MainFrameGray,
    colbacktitle=HeaderGray,
    title={$\blacktriangleleft$ Tool Output},
    fontupper=\small\rmfamily\color{black},
}
\newtcolorbox{interactionwrapper}[1]{
    enhanced,
    colback=CodeBg,
    colframe=MainFrameGray,
    boxrule=1.0pt,             
    arc=4pt,
    title={#1},
    fonttitle=\small\bfseries\sffamily\color{white},
    colbacktitle=MainFrameGray,
    attach boxed title to top left={xshift=-1.2pt, yshift=-2mm}, 
    boxed title style={
        arc=2pt,
        outer arc=2pt,
        boxrule=1.0pt, 
        colframe=MainFrameGray, 
        colback=MainFrameGray
    },
    top=4mm,
    bottom=2mm,
    left=2mm,
    right=2mm,
    before skip=10pt,
    after skip=10pt
}

\begin{figure}[!ht]
\centering
\begin{interactionwrapper}{User Profile Tool}
\begin{toolcall}
<tool_call>
{
  "name": "get_user_profile",
  "arguments": {}
}
</tool_call>
\end{toolcall}
\begin{tooloutput}
The user is a dedicated enthusiast of Nintendo legacy systems, particularly favoring Switch controllers and related accessories. They consistently prefer high-functionality products with features like vibration, motion control, and macro capabilities.
\end{tooloutput}
\end{interactionwrapper}
\caption{An example of user profile tool invocation.} 
\label{fig:user_profile_tool}
\end{figure}

\begin{figure}[!ht]
\centering
\begin{interactionwrapper}{Item Information Tool}
\begin{toolcall}
<tool_call>
{
    "name": "item_info_search",
    "arguments": {
        "item_name": "Plants vs. Zombies Garden Warfare 2 - Xbox One"
    }
}
</tool_call>
\end{toolcall}
\begin{tooloutput}
Found 1 item named "Plants vs. Zombies Garden Warfare 2 - Xbox One" (Categories: Xbox One, Games / Store: Electronic Arts / Price: \$16.99) \par
Overall Rating: 4.6 / 5.0 \par
Popularity: 1625 reviews \par
\end{tooloutput}
\begin{toolcall}
<tool_call>
{
    "name": "candidates_analyze",
    "arguments": {}
}
</tool_call>
\end{toolcall}
\begin{tooloutput}
Candidate Analysis by Category: \par
- Legacy Systems > Nintendo Systems > Game Boy Systems > Game Boy: \par
\qquad -- Item 13: [EverDrive GBA Mini] \par
- Legacy Systems > Nintendo Systems > Nintendo DS > Games: \par
\qquad -- Item 4: [Call of Duty: Black Ops] \par
\qquad -- Item 17: [Cake Mania NDS] \par
- Legacy Systems > PlayStation Systems > PlayStation 2 > Accessories: \par
\qquad -- Item 6: [PS2 Lava Glow 2.4 GHZ RF Wirel...] \par
\qquad -- Item 12: [Wireless 2.4G Controller Gamep...] \par
- Nintendo Switch > Games: \par
\qquad ...
\end{tooloutput}
\end{interactionwrapper}
\caption{An example of item information tool.} 
\label{fig:item_information_tool}
\end{figure}

\begin{figure}[!ht]
\centering
\begin{interactionwrapper}{Behavioral Statistics Tool}
\begin{toolcall}
<tool_call>
{
    "name": "get_session_behavior",
    "arguments": {}
}
</tool_call>
\end{toolcall}
\begin{tooloutput}
Below are the extracted interests for the user's most recent 2 sessions, listed chronologically from oldest to newest: \par
3.3h ago, the user was interested in Games, Legacy Systems. \par
1.8h ago, the user was interested in PlayStation 4, Games. \par

This information reflects the user's historical interests, interest evolution, or potential current focus.
\end{tooloutput}
\begin{toolcall}
<tool_call>
{
  "name": "get_rating_behavior",
  "arguments": {}
}
</tool_call>
\end{toolcall}
\begin{tooloutput}
User's historical rating records are as follows: \par
Five-star items (=5): \par
\qquad PowerA MOGA XP5-X Plus Bluetooth Controller for Mobile \& Cloud Gaming on Android/PC... \par
Neutral items (>=3): \par
\qquad sunwaytek Syozen Z2 Wireless Mechanical Controller for Nintendo Switch... \par
\qquad Wireless Switch Pro Controller with 9 Color Adjustable LED/Unique Crack/Turbo/Vibration/Motion Control/Wake-Up... \par
\qquad Dishonored: The Death of the Outsider... \par
Low-rated items (<3): \par
\qquad PowerA Enhanced - Seafoam Fade, Gamepad, Wired Video Game Controller, Gaming Controller, Xbox Series X|S, Xbox One (Only at Amazon)...
\end{tooloutput}
\end{interactionwrapper}
\caption{An example of behavioral statistics tool.} 
\label{fig:behavioral_statistics_tool}
\end{figure}

\begin{figure}[!ht]
\centering
\begin{interactionwrapper}{Collaborative Information Tool}
\begin{toolcall}
<tool_call>
{
    "name": "get_similar_items",
    "arguments": {
    "item_title": "Tiger Woods PGA Tour"
    }
}
</tool_call>
\end{toolcall}
\begin{tooloutput}
Based on collaborative filtering, users who enjoy the following items are more likely to purchase [Tiger Woods PGA Tour]: \par
[1] Nintendo Switch Mario Kart 8 Deluxe Wheel \par
[2] GIRIAITUS Wii Motion Plus Adapter \par
[3] Resident Evil 5: Gold Edition...
\end{tooloutput}
\begin{toolcall}
<tool_call>
{
    "name": "get_similar_users",
    "arguments": {}
}
</tool_call>
\end{toolcall}
\begin{tooloutput}
The following users have similar preferences to the current user. Their recent purchases include: \par
User [1]: [Madden NFL 10 - Playstation 3], [Dirt - Xbox 360], [NBA Hangtime - PlayStation]...
\end{tooloutput}   
\end{interactionwrapper}
\caption{An example of collaborative information tool invocation.} 
\label{fig:collaborative_information_tool}
\end{figure}

\end{document}